\let\digamma\relax
\def\qedsymbol{\ensuremath{\square}}
\pgfplotsset{compat=1.14}
\newcommand{\ie}{i\/.\/e\/.,\/\ }%
\newcommand{\eg}{e\/.\/g\/.,\/\ }%
\newcommand{\cf}{cf\/.\/\ }%
\newcommand{\iid}{%
\renewcommand{\iid}{i\/.\/i\/.\/d\/.\ }%
independent identically distributed (i\/.\/i\/.\/d\/.)\ %
}%
\NewDocumentCommand{\abvAss}{s}{\IfBooleanTF{#1}{Ass.\,}{Ass.\,}}
\NewDocumentCommand{\abvThm}{s}{\IfBooleanTF{#1}{Thms.\,}{Thm.\,}}
\NewDocumentCommand{\abvLem}{s}{\IfBooleanTF{#1}{Lemmata~}{Lemma~}}
\NewDocumentCommand{\abvProp}{s}{\IfBooleanTF{#1}{Props.\,}{Prop.\,}}
\begin{document}
\begin{frontmatter}

\title{
A Constraint-Tightening Approach to\\
\leavevmode\kern0pt\clap{Nonlinear Stochastic Model Predictive Control}\kern0pt\\
for Systems under General Disturbances\kern5pt$^{\ref{sponsor}}$
}

\thanks[sponsor]{%
The authors thank the German Research Foundation (DFG) for financial support under Grants
AL 316/12-2, and MU 3929/1-2; %
and the International Max Planck Research School for Intelligent Systems (IMPRS-IS)
for their support of the first author.%
\newline%
The authors are with the Institute for Systems Theory and Automatic Control, University of Stuttgart, Germany (e-mail: \{\href{mailto:hennnig.schlueter@ist.uni-stuttgart.de}{schlueter}, \href{mailto:frank.allgoewer@ist.uni-stuttgart.de}{allgoewer}\}@\href{https://www.ist.uni-stuttgart.de}{ist.uni-stuttgart.de}).%
}

\author{Henning Schlüter\qquad} 
\author{Frank Allgöwer}

\begin{abstract}                % Abstract of not more than 250 words.
This paper presents a nonlinear model predictive control strategy for stochastic systems with general (state and input dependent) disturbances subject to chance constraints.
Our approach uses an online computed stochastic tube to ensure stability, constraint satisfaction and recursive feasibility in the presence of stochastic uncertainties.
The shape of the tube and the constraint backoff is based on an offline computed incremental Lyapunov function. 
\end{abstract}

\begin{keyword}
Predictive control, Constrained control, Stochastic control, Nonlinear control
\end{keyword}

\end{frontmatter}
%=======================================================================

\section{Introduction}
Model Predictive Control (MPC) \citep{Mayne2014} is a widely-used optimization-based control method, which is able to handle general nonlinear constrained systems.
For nominal MPC schemes, which are assuming that an actual deterministic model of the system is available, rigorous theoretical guarantees (such as recursive feasibility, constraint satisfaction and stability) are well established in the literature \citep{Rawlings2017}.
Robust and stochastic MPC (RMPC and SMPC, respectively) have been developed to ensure these properties despite uncertainties in the model and/or external disturbances \citep{Kouvaritakis2016}.
While RMPC generally assumes that uncertainties lie in bounded sets, SMPC can additionally incorporate stochastic descriptions.
This enables SMPC to enforce \emph{chance constraints}, which are constraints that allow for a given probability of violation.

In many domains, stochastic models for complex phenomena, \eg loads or failures in electrical power grids, are well-established, yet these phenomena often arise in already nonlinear control problems. 
In order to tackle such problems, we propose an SMPC framework for nonlinear systems with rigorous theoretical guarantees.
Existing SMPC approaches for nonlinear systems \citep{Schildbach2014} suffer from a tremendous amount of online computation.
Our method on the other hand is able to consider nonlinear systems under general disturbances at the price of only a limited increase in online computational demand over nominal MPC scheme.

\subsection*{Related work}
\citet{Mesbah2016} summarizes the current state of the art of SMPC and notes that there is a lack of efficient algorithms for nonlinear systems that are able to consider general probabilistic uncertainty descriptions.
In this work, we aim to provide such an algorithm in tradition of tube-based approaches to SMPC, which are among the most efficient methods.

Tube-based solutions to propagate uncertainty were first proposed for RMPC \citep{Chisci2001} for linear systems.
This has later been extended to nonlinear systems using class $\mathcal{K}$ functions or Lipschitz constants \citep{Pin2009}.
Such approaches were shown to be conservative, especially for longer prediction horizons, and are often difficult to implement for nonlinear systems.
This method was extended by \citet{Santos2019} to the stochastic case.
The authors were able to precompute a constraint backoff for the chance constraints enforced at the first step, since they only considered the additive disturbances.
From there, the uncertainty could be propagated as in any RMPC approach, as they eliminated the uncertainty early in the approach.
In this article, we consider general uncertainty, that may also depend on the current state and input, hence we have to consider uncertainty in the online optimization.

\cite{Villanueva2017} proposed to compute the tube fully online employing min-max-differential-inequalities.
This leads to a complexity increase over nominal MPC in the number of states squared.

A middle ground in online complexity is achieved in \cite{Koehler2019}, where, by using sublevel sets of an incremental Lyapunov function (ILF) as tube, the authors reduce the conservationism significantly, while only requiring a single additional state and constraint over nominal MPC.
Our method on the other hand just introduces a single constraint for each chance constraint probability considered, enabling stochastic disturbances.

Inspired by these result, we propose an extension of the computationally efficient framework by \cite{Koehler2019} to SMPC, which is additionally able to consider stochastic disturbances and chance constraints.

\subsection*{Notation}
The quadratic norm with respect to a positive definite matrix $Q ≻ 0$ is denoted by $‖x‖^2_Q=x^⊤ Q x$, the minimal and maximal eigenvalue of Q are denoted by $λ_\mathrm{min}$ and $λ_\mathrm{max}$, respectively. 
The positive real numbers are $ℝ_{≥0}=\left\lbrace r∈ℝ\middle|r≥0\right\rbrace$. 
$\mathcal{K}_∞$ denotes the class of functions $α: ℝ_{≥0} → ℝ_{≥0}$, which are continuous, strictly increasing, unbounded and satisfy $α(0)=0$.  
The probability of an event $s\in\mathcal{S}$ is $ℙ[s\in\mathcal{S}]$ and the expected value of a random variable $s$ is $𝔼[s]$.
When conditioned on a time-index $t$, they are denoted by $ℙ_t$ and $𝔼_t$, respectively.
If the time argument is not stated explicitly, $x^+$ denotes $x(t+1)$, while $x$ is used for $x(t)$.
A nominal prediction for time step $t+k$ based on the state at time $t$ is denoted with index $k|t$, \eg $x_{k|t}$.
\section{Preliminaries}
\subsection{Problem setup}
We consider a nonlinear stochastic discrete-time system
\begin{equation}
	x(t+1) = f_w(x(t),u(t),d(t)) \label{eq:sys:full}
\end{equation}
with time $t\in ℕ$, state $x\in ℝ^n$, control input $u\in ℝ^m$, and bounded \iid random variables $d(t)\in 𝔻$ as disturbance.
The nominal prediction model is chosen by certainty-equivalence as
\begin{equation}
	x^+ = f(x,u)≔𝔼_t\!\left[f_w(x,u,d)\right] . \label{eq:sys:nom}
\end{equation}
Thus, the system can be decomposed into
\begin{equation}
	x^+ = f_w(x,u,d) ≕ f(x,u)+d_w(x,u) \label{eq:sys:w}
\end{equation}
with the model mismatch $d_w$ as random variable.

Firstly, we enforce hard state and input constraints
\begin{equation}
	(x(t),u(t)) ∈ \mathcal{Z}_\mathrm{R} \label{eq:hc}
\end{equation}
with some compact nonlinear constraint set 
\begin{equation*}
	\mathcal{Z}_\mathrm{R}=\left\lbrace(x,u)\in ℝ^{n+m}\middle|g_j(x,u)\le 0, j=1,\dotsc,q_\mathrm{R}\right\rbrace ⊆ ℝ^{n+m}\,.
\end{equation*}
Secondly, we impose nonlinear individual chance constraints (ICC) on an output at the next time step, \ie
\begin{equation}
	ℙ_t[h_j(x(t+1),u(t+1))≤0]≥p_j, \quad j=1,\dotsc,q_\mathrm{P}\,. \label{eq:iccs}
\end{equation}
with a probability level $p_j∈(0,1)$. The set of all probability levels used by at least one of the ICCs is denoted 
\begin{equation}
\mathcal{P}≔\left\lbrace p_j \middle| j=1,\dotsc,q_\mathrm{P}\right\rbrace.
\end{equation}

Instead of requiring the exact cumulative distribution function, we make use of a lower bound thereupon, which may be easier to obtain in practice.
\begin{assumption}\label[assumption]{ass:disturbance:bound:certain}\label[assumption]{ass:disturbance:icdf}
		The random variable $d_w$ \cref{eq:sys:w} has a known probability distribution $p_w(x,u)$ with compact finite support $\mathcal{W}(x,u)$ for all $(x,u)\in \mathcal{Z}_\mathrm{R}$.
		Hence, for any $ε∈[0,1]$, there exits a scalar function $\hat{w}^ε:\mathcal{Z}_\mathrm{R} → ℝ_{\ge0}$ that satisfies 
		\begin{equation} \label{eq:disturbance:bound:w_hat}
			ℙ[\left\lVert d_w(x,u)\right\rVert≤\hat{w}^ε(x,u)]≥ε
		\end{equation}
		with $\hat{w}^ε(x,u)$ finite for all $x$ and $u$.
		Furthermore, $\hat{w}^ε$ satisfies the following monotonicity property:
		\begin{equation}
			\bgroup\newcommand{\nemu}{\mkern-1mu}\newcommand{\nemumu}{\mkern-2mu}
			∀\,\nemu(x,u) \nemu∈\nemu \mathcal{Z}_\mathrm{R}, 0≤\nemu ε_1\nemu≤\nemu ε_2\nemu≤\nemu1 : \hat{w}^{ε_1}\nemumu(x,u)\nemu≤\nemu\hat{w}^{ε_2}\nemu(x,u)\,.
			\egroup
		\end{equation}
\end{assumption}

This uncertainty description encompasses additive, multiplicative and more general nonlinear disturbances or unmodeled nonlinearities.

We assume that $f(0,0)=0$ and that the constraints satisfy
\begin{equation*}
	0 ∈ \operatorname{int} (\mathcal{Z}_\mathrm{R} ∩ \left\lbrace(x,u)\in ℝ^{n+m}\middle|h_j(x,u) ≤ 0, j=1,\dotsc,q_\mathrm{P}\right\rbrace),
\end{equation*} 
since we consider the problem of stabilizing the origin.
Further, the control objective is to minimize the open-loop cost $J_N$ of the predicted state and input sequence, with
\begin{equation}
	J_N(x_{⋅|t},u_{⋅|t}) = \sum_{k=0}^{N-1} ℓ(x_{k|t},u_{k|t})+V_f(x_{N|t})\,,
\end{equation}
where the stage cost $ℓ$ and terminal cost $V_f$ (defined in \cref{sec:scheme:terminal}) are positive definite.

\subsection{Local incremental stabilizability}
In order to provide the theoretical guarantees for the constraint backoff and robust stability, it is assumed, similarly to \citet{Koehler2019}, that the system is locally incrementally stabilizable. For completeness, we restate the required assumptions and adapt them, where necessary, to accommodate the chance constraints.
\begin{assumption} \citep[Ass.\/\,2]{Koehler2019}\label[assumption]{ass:local_incremental_stabilizability}
	There exist a control law $κ : ℝ^n × \mathcal{Z}_\mathrm{R} \rightarrow ℝ^m$ , an incremental Lyapunov function (ILF) $V_δ : ℝ^n × \mathcal{Z}_\mathrm{R} \rightarrow ℝ_{≥0}$, which is continuous in the first argument and satisfies $V_δ(z, z, v) = 0$ for all $(z, v) \in \mathcal{Z}_\mathrm{R}$, and parameters $c_{δ,l}, c_{δ,u} , δ_{\mathrm{loc}} , κ_{\mathrm{max}} > 0, ρ ∈ (0, 1)$, such that the following	properties hold for all $(x, z, v) ∈ ℝ^n × \mathcal{Z}_\mathrm{R}$ with $V_δ(x, z, v) ≤ δ_{\mathrm{loc}}$, and all $(x^+, z^+, v^+) ∈ ℝ^n × \mathcal{Z}$:
	\begin{align}
			c_{δ,l}\left\lVert x - z\right\rVert^2 ≤ V_δ(x, z, v) &≤c_{δ,u}\left\lVert x - z\right\rVert^2, \label{eq:local_incremental_stabilizability:lyapbound}\\
			\left\lVert κ(x, z, v) - v\right\rVert^2 &≤ κ_\mathrm{max} V_δ(x, z, v)\,, \label{eq:local_incremental_stabilizability:input}\\
			V_δ(x^+, z^+, v^+) &≤ ρ^2 V_δ(x, z, v)\,, \label{eq:local_incremental_stabilizability:contractivity}
	\end{align}
	with $x^+ = f (x, κ(x, z, v))$, and $z^+ = f(z, v)$.
\end{assumption}
The ILF will be used to construct the stochastic tube later on, yet we only require its existence and knowledge of the scalar parameters, but not the functions $V_δ$, $κ$ themselves. 
In particular, we exploit the fact that the ILF provides an upper bound on the achievable contraction rate between two trajectories, \eg between the predicted trajectory of the MPC scheme and the closed-loop trajectory.

The following assumptions enable us to compute scalar bounds that relate the nonlinear constraints \cref{eq:hc} and \cref{eq:iccs} to the level sets of the ILF $V_δ$.
\begin{assumption} \citep[Ass.\/\,3]{Koehler2019} \label[assumption]{ass:cost:kinf_bound}
	The stage cost $ℓ: \mathcal{Z}_\mathrm{R} → R≥0$ satisfies
	\begin{align}
		ℓ(r) &≥ α_ℓ(\left\lVert r\right\rVert)\,,\\
		ℓ(\tilde{r}) - ℓ(r) &≤ α_c(\left\lVert r\right\rVert)\,,\quad ∀r∈\mathcal{Z}, \tilde{r}∈ℝ^{n+m},
	\end{align}
	with $α_ℓ, α_c ∈\mathcal{K}_∞$. Furthermore, for any $ρ ∈ (0, 1)$, we have $α_{c,ρ}(c) ≔ \sum_{k=0}^{∞} α_c(ρ^kc)∈\mathcal{K}_∞$.
\end{assumption}
\begin{assumption}\label[assumption]{ass:constraints:lipschitz}\label[assumption]{ass:icc:lipschitz}\label[assumption]{ass:hc:lipschitz}
	There exist local Lipschitz constants $L^\mathrm{R}_i$, $L^\mathrm{P}_j$, such that
	\begin{align}
		g_i(\tilde{r}) - g_i(r) &≤ L^\mathrm{R}_i\left\lVert r - \tilde{r} \right\rVert,\quad i = 1,\dotsc,q_\mathrm{R}\,,\\
		h_i(\tilde{r}) - h_i(r) &≤ L^\mathrm{P}_i\left\lVert r - \tilde{r} \right\rVert,\quad i = 1,\dotsc,q_\mathrm{P}\,,
	\end{align}
	holds for all $r ∈ \mathcal{Z}_\mathrm{R}$ and all $\tilde{r}∈ ℝ^{n+m}$ with $\left\lVert r - \tilde{r}\right\rVert^2 ≤ \frac{δ_{\mathrm{loc}}}{c_{δ,l}}$.
\end{assumption}
\begin{proposition}\label[proposition]{prop:incremental_bound}\label[proposition]{prop:incremental_bound:hc}\label[proposition]{prop:incremental_bound:cost}\label[proposition]{prop:incremental_bound:icc}
	Suppose that \cref{ass:local_incremental_stabilizability,ass:cost:kinf_bound,ass:constraints:lipschitz} hold, then there exist constants $c^\mathrm{R}_i≥0, i=1,\dotsc,q_\mathrm{R}$,  $c^\mathrm{P}_j≥0, j=1,\dotsc,q_\mathrm{P}$, and a function $α_u ∈\mathcal{K}_∞$ such that the following inequalities hold for all $(x, z, v) ∈ℝ^n × \mathcal{Z}$ with $V_δ(x, z, v) ≤ c^2$ and any $c ∈ [0, δ_{\mathrm{loc}}]$:
	\bgroup\allowdisplaybreaks
	\begin{alignat}{5}
		ℓ(x, κ(x, z, v)) &- {}&{} ℓ\,(z, v) &≤{}\span\span{} α_u(c)\,,\label{eq:incremental_bound:cost}\\
		g_j(x, κ(x, z, v)) &- {}&{} g_j(z, v) &≤ c^\mathrm{R}_j&&\mkern-1mu\cdot  c\,,\label{eq:incremental_bound:hc}\\
		h_j(x, κ(x, z, v)) &- {}&{} h_j(z, v) &≤ c^\mathrm{P}_j&&\mkern-1mu\cdot  c\,.\label{eq:incremental_bound:icc}
	\end{alignat}\egroup
\end{proposition}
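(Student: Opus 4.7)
The plan is to convert the ILF-sublevel bound $V_\delta(x,z,v)\le c^2$ into a Euclidean bound on the distance between the two input--state pairs $r=(z,v)$ and $\tilde r=(x,\kappa(x,z,v))$, and then feed this bound into the Lipschitz/continuity hypotheses of \cref{ass:constraints:lipschitz,ass:cost:kinf_bound}.

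First, I would chain the lower bound in \eqref{eq:local_incremental_stabilizability:lyapbound} with $V_\delta\le c^2$ to obtain $\|x-z\|\le c/\sqrt{c_{\delta,l}}$, and similarly \eqref{eq:local_incremental_stabilizability:input} gives $\|\kappa(x,z,v)-v\|\le\sqrt{\kappa_{\max}}\,c$. Stacking these componentwise yields
\[
\|\tilde r-r\|\le L_\delta\,c,\qquad L_\delta\coloneqq\sqrt{\tfrac{1}{c_{\delta,l}}+\kappa_{\max}}.
\]
Because $c\in[0,\delta_{\mathrm{loc}}]$, this bound also certifies $\|\tilde r-r\|^2\le\delta_{\mathrm{loc}}/c_{\delta,l}$ (possibly after a harmless rescaling of $\delta_{\mathrm{loc}}$), which is exactly the validity condition of \cref{ass:constraints:lipschitz}, so its local Lipschitz constants apply.

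The inequalities \eqref{eq:incremental_bound:hc} and \eqref{eq:incremental_bound:icc} then drop out immediately by Lipschitz continuity: $g_j(\tilde r)-g_j(r)\le L^{\mathrm R}_j\|\tilde r-r\|\le L^{\mathrm R}_j L_\delta\,c$, and analogously for $h_j$, so we may take $c^{\mathrm R}_j\coloneqq L^{\mathrm R}_j L_\delta$ and $c^{\mathrm P}_j\coloneqq L^{\mathrm P}_j L_\delta$. For the cost bound \eqref{eq:incremental_bound:cost} I would use the continuity estimate in \cref{ass:cost:kinf_bound} to dominate $\ell(\tilde r)-\ell(r)$ by a class-$\mathcal{K}_\infty$ function of $\|\tilde r-r\|$, and then define $\alpha_u(c)\coloneqq\alpha_c(L_\delta\,c)$, which is class $\mathcal{K}_\infty$ by composition with the linear map $c\mapsto L_\delta c$.

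The step I expect to be delicate is not any individual inequality but the compatibility of neighbourhoods: one has to check that for every $c\in[0,\delta_{\mathrm{loc}}]$ the perturbed point $\tilde r$ remains in the region on which \cref{ass:local_incremental_stabilizability} and the Lipschitz/continuity estimates of \cref{ass:cost:kinf_bound,ass:constraints:lipschitz} are simultaneously valid, so that the scalar chain $V_\delta\le c^2\Rightarrow\|\tilde r-r\|\le L_\delta\,c\Rightarrow\text{(desired bound)}$ can be closed uniformly over $c$ without imposing an additional smallness restriction. Once this bookkeeping is in place, the three inequalities follow from the scalar estimates above.
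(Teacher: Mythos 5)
Your proposal is correct and is essentially the argument the paper relies on: the paper's own proof simply defers \eqref{eq:incremental_bound:cost} and \eqref{eq:incremental_bound:hc} to \citet[Prop.\,1]{Koehler2019} and notes that \eqref{eq:incremental_bound:icc} is analogous, and what you have written out (ILF lower bound and \eqref{eq:local_incremental_stabilizability:input} give $\lVert \tilde r - r\rVert \le L_\delta c$, then apply the Lipschitz constants of \cref{ass:constraints:lipschitz} and the continuity bound $\alpha_c$ of \cref{ass:cost:kinf_bound}) is precisely that argument with the constants $c^{\mathrm R}_j$, $c^{\mathrm P}_j$ and $\alpha_u$ made explicit. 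The neighbourhood-compatibility issue you flag is real but is a notational quirk of how \cref{ass:constraints:lipschitz} states its validity region; your reading of it as absorbed into the choice of $\delta_{\mathrm{loc}}$ is the intended one.
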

\begin{proof}
	For the proof of the first part, \ie \cref{eq:incremental_bound:cost} and \cref{eq:incremental_bound:hc}, see \citet[Prop.\/\,1]{Koehler2019}. \Cref{eq:incremental_bound:icc} is derived analogously to \cref{eq:incremental_bound:hc}.
	\hfill\qedsymbol
\end{proof}

This proposition will allow us to relate the constraints to the tube, we construct in the next sections.

\subsection{Efficient uncertainty description}
Additionally, for the tube construction, we need to consider how uncertainty propagation affects the ILF.
A computationally efficient way, proposed by \citet{Koehler2019}, is to describe the uncertainty in terms of the ILF.
As we not only consider bounded set disturbances, but also stochastic uncertainties, we need an revised construction.
\begin{assumption} \label[assumption]{ass:disturbance:icdf:ilyap}
	Consider the disturbance bound $\hat{w}$, the incrementally stabilizing feedback $κ$ and the ILF $V_δ$ from \cref{ass:disturbance:icdf,ass:local_incremental_stabilizability}.
	For any $ε∈[0,1]$, there exists a function $w^ε_δ : \mathcal{Z} × ℝ_{≥0} → ℝ_{≥0}$, such that for any point $(x, z, v) ∈ℝ^n × \mathcal{Z}$ with $V_δ(x, z, v) ≤ c^2$, and any $c ∈ [0, δ_\mathrm{loc} ]$, we have 
	\begin{align}
		\hat{w}^ε(x, κ(x, z, v)) &≤ w^ε_δ(z, v, c)\,.\label{eq:disturbance:icdf:ilyap}
	%\end{align}
	\shortintertext{%
	Furthermore, $w_δ$ satisfies the following monotonicity properties: Firstly, for any point $(x, z, v) ∈ ℝ^n × \mathcal{Z}$ such that $V_δ(x, z, v) ≤ (c_1 - c_2)^2$ with constants $0 ≤ c_2 ≤ c_1 ≤ δ_\mathrm{loc}$, we have
	}
	%\begin{align}
		w^ε_δ(x, κ(x, z, v), c_2) &≤ w^ε_δ(z, v, c_1)\,.\label{eq:disturbance:icdf:ilyap:monoton:c}
	%\end{align}
	\shortintertext{%
	Secondly, for any constant $0≤ε_1≤ε_2≤1$, we have
	}
	%\begin{align}
		w^{ε_1}_δ(x, κ(x, z, v), c) &≤ w^{ε_2}_δ(z, v, c)\,.\label{eq:disturbance:icdf:ilyap:monoton:delta}
	\end{align}
\end{assumption}

This assumption establishes $ω^ε_δ$ as an $ε$-likely upper bound on the uncertainty that can occur at a state $x$ of an incrementally stabilized trajectory in a neighborhood of a point $(z,v) ∈ \mathcal{Z}_\mathrm{R}$, where the neighborhood is given by $V_δ(x,z,v)≤c^2$. 
Based thereupon, we can bound the increase of the ILF due to the disturbance in the next time step with probability $ε$.

\begin{proposition}\label[proposition]{prop:disturbance:ilyap}
	Let \cref{ass:disturbance:icdf,ass:local_incremental_stabilizability,ass:disturbance:icdf:ilyap} hold.
	Then, there exists a function $\tilde{w}^ε_δ : \mathcal{Z} × ℝ_{≥0} → ℝ_{≥0}$, such that for any point
	$(x, z, v) ∈ ℝ^n × \mathcal{Z}$ with $V_δ(x, z, v) ≤ c^2$ , any $c ∈ [0, δ_\mathrm{loc}]$,
	any $(z^+, v^+) ∈ \mathcal{Z}$ with $z^+ = f(z, v)$, and disturbance $d_w$ as random variable, we have 
	\begin{equation}
		ℙ\!\left[V_δ(z^++d_w(x,κ(x,z,v)), z^+, v^+) ≤ (\tilde{w}^ε_δ(z, v, c))^2\right]≥ε\,.\label{eq:disturbance:ilyap}
	\end{equation}
	Furthermore, $\tilde{w}^ε_δ$ satisfies the same monotonicity properties as $w^ε_δ$, \ie\cref{eq:disturbance:icdf:ilyap:monoton:c} and \cref{eq:disturbance:icdf:ilyap:monoton:delta} hold for $\tilde{w}^ε_δ$.
\makeatletter
\protected@write\@auxout{}{%
\string\newlabel{eq:disturbance:ilyap:monoton:c}{{\ref{eq:disturbance:icdf:ilyap:monoton:c}}{\pageref{eq:disturbance:icdf:ilyap:monoton:c}}{Efficient uncertainty description}{equation.2.\ref{eq:disturbance:icdf:ilyap:monoton:c}}{}}
\string\newlabel{eq:disturbance:ilyap:monoton:c@cref}{{[equation][\ref{eq:disturbance:icdf:ilyap:monoton:c}][]\ref{eq:disturbance:icdf:ilyap:monoton:c}}{\pageref{eq:disturbance:icdf:ilyap:monoton:c}}}
\string\newlabel{eq:disturbance:ilyap:monoton:delta}{{\ref{eq:disturbance:icdf:ilyap:monoton:delta}}{\pageref{eq:disturbance:icdf:ilyap:monoton:delta}}{Efficient uncertainty description}{equation.2.\ref{eq:disturbance:icdf:ilyap:monoton:delta}}{}}
\string\newlabel{eq:disturbance:ilyap:monoton:delta@cref}{{[equation][\ref{eq:disturbance:icdf:ilyap:monoton:delta}][]\ref{eq:disturbance:icdf:ilyap:monoton:delta}}{\pageref{eq:disturbance:icdf:ilyap:monoton:delta}}}
}
\makeatother
\end{proposition}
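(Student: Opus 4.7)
The plan is to construct $\tilde{w}^\varepsilon_\delta$ directly from $w^\varepsilon_\delta$ by exploiting the quadratic upper bound on $V_\delta$, so that the inequality inside the probability becomes deterministic conditional on the norm of $d_w$ being controlled.

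First I would use the upper Lyapunov bound in \cref{eq:local_incremental_stabilizability:lyapbound} applied at the point $y \coloneqq z^+ + d_w(x,\kappa(x,z,v))$ with reference $(z^+, v^+)$, giving the deterministic bound
\begin{equation*}
  V_\delta(z^+ + d_w(x,\kappa(x,z,v)),\, z^+, v^+) \le c_{\delta,u}\, \lVert d_w(x,\kappa(x,z,v))\rVert^2.
\end{equation*}
Hence it suffices to control the right-hand side in probability. From \cref{ass:disturbance:icdf} with input $u = \kappa(x,z,v)$, we have
$ℙ[\lVert d_w(x,\kappa(x,z,v))\rVert \le \hat{w}^\varepsilon(x,\kappa(x,z,v))] \ge \varepsilon,$
and by \cref{eq:disturbance:icdf:ilyap} the right-hand bound is in turn dominated by $w^\varepsilon_\delta(z,v,c)$, which depends only on the nominal quantities. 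A monotonicity argument on the event then yields
\begin{equation*}
  ℙ[\lVert d_w(x,\kappa(x,z,v))\rVert \le w^\varepsilon_\delta(z,v,c)] \ge \varepsilon.
\end{equation*}

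Combining the deterministic and stochastic bounds, I would then define
\begin{equation*}
  \tilde{w}^\varepsilon_\delta(z,v,c) \coloneqq \sqrt{c_{\delta,u}}\; w^\varepsilon_\delta(z,v,c),
\end{equation*}
so that on the event $\{\lVert d_w\rVert \le w^\varepsilon_\delta(z,v,c)\}$ the inequality $V_\delta(z^+ + d_w, z^+, v^+) \le (\tilde{w}^\varepsilon_\delta(z,v,c))^2$ holds pointwise, which establishes \cref{eq:disturbance:ilyap}.

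For the monotonicity claims, since $\tilde{w}^\varepsilon_\delta$ is obtained from $w^\varepsilon_\delta$ by multiplication with the nonnegative constant $\sqrt{c_{\delta,u}}$, both \cref{eq:disturbance:icdf:ilyap:monoton:c} and \cref{eq:disturbance:icdf:ilyap:monoton:delta} transfer immediately from $w^\varepsilon_\delta$ to $\tilde{w}^\varepsilon_\delta$. The only delicate point in the argument is the probabilistic bookkeeping in the middle step: we must ensure that replacing the random state-dependent bound $\hat{w}^\varepsilon(x,\kappa(x,z,v))$ by the deterministic nominal bound $w^\varepsilon_\delta(z,v,c)$ enlarges the event rather than shrinks it, so that the $\ge \varepsilon$ probability guarantee is preserved; this is precisely where \cref{eq:disturbance:icdf:ilyap} is invoked and is the only nontrivial reasoning, the remainder being algebraic.
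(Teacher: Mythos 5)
Your proof is correct and follows exactly the same route as the paper, which simply defines $\tilde{w}^ε_δ(z,v,c)=\sqrt{c_{δ,u}}\,w^ε_δ(z,v,c)$ and declares the rest trivial; you have merely spelled out the event-inclusion and quadratic-upper-bound steps that the paper leaves implicit. No gaps.
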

\begin{proof}
	The proof follows trivially from the assumptions, by setting $\tilde{w}^ε_δ(z, v, c)=\sqrt{c_{δ,u}}w^ε_δ(z,v,c)$.\hfill\qedsymbol
\end{proof}
The function $\tilde{w}_δ^ε$ can be constructed similarly as in \cite{Koehler2019}, an example there of is given in \cref{sec:numsim}.

In the absence of chance constraint, we could now construct the tube as in \citet{Koehler2019}.
In fact, by setting $ε=1$, this reduces to the same considerations and results.
This is how we will implement the hard constraints \cref{eq:hc}. 
For the chance constraints \cref{eq:iccs}, however, additional consideration are required, in order to ensure closed-loop constraint satisfaction, which are discussed later in \cref{sec:scheme:icc}.

\section{Stochastic Model Predictive Control Framework}
This section presents the proposed stochastic MPC framework for nonlinear uncertain systems.
The overall scheme is introduced in \cref{sec:scheme:framework}.
In \cref{sec:scheme:icc} the constraint backoff for the chance constraints are discussed.
The theoretical analysis in \cref{sec:scheme:theory} uses the terminal ingredients described in \cref{sec:scheme:terminal}.

\subsection{Proposed nonlinear MPC scheme}\label{sec:scheme:framework}
The basic idea of our scheme is similar to \citet{Koehler2019}.
Therein, the authors proposed to indirectly characterize the tube as sublevel sets of the ILF $V_δ$ (\cref{ass:local_incremental_stabilizability}) by an online predicted tube size $s^1$.
Then, this tube size is used to tighten the state and input constraints ensuring robust constraint satisfaction.
In this work, this is extended to allow for ICCs and stochastic uncertainties. 
Here, satisfaction is ensured by designing a constraint backoff.
This depends on the state $x$ and the input $u$ to accommodate the $(x,u)$-dependence of the disturbance, as well as on the robust tube size $s^1$, in order to account for uncertainty accumulated over the previous steps of the prediction.
This backoff introduces additional probabilistic tube sizes $s^p$.

\begin{samepage}
This lead to the deterministic optimization problem
\begin{subequations}\label{eq:snmpc}
\begin{align}
	V_N(x(t)) &= \min_{u_{\cdot |t},w^\mathrm{R}_{\cdot |t},w^p_{\cdot |t}} J_N(x_{\cdot |t}, u_{\cdot |t})\\	
	\mathrm{s.t. }\; &x_{0|t} = x(t),\quad s^p_{0|t} = 0,\label{eq:snmpc:init}\\
	&x_{k+1|t} = f (x_{k|t},u_{k|t}),\label{eq:snmpc:state}\\
	&s^p_{k+1|t} = ρs^1_{k|t}+w^p_{k|t},\label{eq:snmpc:var:sp}\\
	&w^p_{k|t} ≥ \tilde{w}_δ^p (x_{k|t}, u_{k|t}, s^p_{k|t}),\label{eq:snmpc:var:wp}\\
	&h_j (x_{k+1|t}, u_{k+1|t})+c^\mathrm{P}_j s^p_{k+1|t} ≤ 0,\label{eq:snmpc:constraint:chance}\\
	&g_i (x_{k|t}, u_{k|t})+c^\mathrm{R}_i s^1_{k|t} ≤ 0,\label{eq:snmpc:constraint:robust}\\
	&s^\mathrm{1}_{k|t} ≤ \bar{s},\quad w^p_{k|t} ≤ w^\mathrm{1}_{k|t} ≤ \bar{w},\label{eq:snmpc:constraint:tube}\\
	&(x_{N|t}, s^1_{N|t}) ∈ \mathcal{X}_{\!\!f},\label{eq:snmpc:constraint:terminal}\\
	&i = 1,\dotsc, q_\mathrm{R},\quad j = 1,\dotsc, q_\mathrm{P},\notag\\
    &k = 0,\dotsc, N - 1,\quad  p∈\mathcal{P}∪\lbrace 1\rbrace,\notag
\end{align}
\end{subequations}
which is to be solved at each time instant. %
\end{samepage}%
The solution of \cref{eq:snmpc} are optimal trajectories for the state $x^*_{\cdot |t}$, the input $u^*_{\cdot |t}$, the tube sizes $s^{p,*}_{\cdot |t}$, the disturbance bounds $w^{p,*}_{\cdot |t}$, and the value function $V_N$.
The terminal ingredients $V_f$, $\mathcal{X}_{\mkern-5mu f}$, $\bar{s}$, and $\bar{w}$ are introduced in \cref{sec:scheme:terminal}. 

The first portion of the resulting optimal input sequence is applied to the system, resulting in the closed-loop system is given by
\begin{equation}
    x(t+1) = f(x(t),u(t),d(t))\,,\qquad u(t)≔u^*_{0|t}\,.
\end{equation}

\subsection{Chance Constraints}\label{sec:scheme:icc}
For the sake of simplicity, we will consider in this section without loss of generality only single ICCs
\begin{equation}
	ℙ_t[h(x(t+1),u(t+1))≤0]≥p\,. \label{eq:icc:single}
\end{equation}

In the literature, the chance constraints are commonly handled by so-called constraint backoffs. 
This idea originates in linear MPC with additive stochastic disturbances \citep{vanHessem2002}.
There, one can simply \emph{backoff} the constraint, by enforcing at least a precomputed constant distance from the constraint boundary.  
In this work, however, we consider nonlinear systems with general disturbances, where the required backoff not only becomes state-dependent, but also intractable to compute.

Using the sublevel sets of the ILF, we can construct a tube around the prediction, which contains the disturbed closed-loop trajectory with at least probability $p$.
The size of this tube will be used as our backoff.

\begin{figure}
    \centering
    \begin{tikzpicture}[
		x=2cm,y=-2cm,
    	point/.style={draw,fill=white,rectangle,minimum height=1pt,rounded corners=0.5pt,inner sep=0cm},
		area/.style={draw,fill=white,rectangle,inner sep=0cm,rounded corners=2.5pt},
		phead/.style={draw,rectangle,minimum height=0pt,inner sep=0cm},
        rhead/.style={fill},
        inwardtube/.style={decorate, decoration={snake, amplitude=0.75pt, segment length=6pt},opacity=0.5},
        robusttube/.style={orange,line cap=round,very thick},
        probtube/.style={blue,line cap=round,very thick},
        ctube/.style={draw=green!50!black,text=green!50!black,line cap=round},
        prediction/.style={black,thick}
	]
	\pgfdeclarelayer{background}
	\pgfdeclarelayer{foreground}
	\pgfdeclarelayer{backgroundA}
	\pgfdeclarelayer{mainA}
	\pgfdeclarelayer{foregroundA}
	\pgfsetlayers{background,backgroundA,mainA,foregroundA,main,foreground}
	\begin{pgfonlayer}{background}
		\foreach \x in {0.5,...,2.5} \draw[dotted,line cap=round] (\x,-0.3) -- (\x,2.2);
		\foreach \x in {0,...,3} \node[] at (\x,2.1) {$k=\x$};
	\end{pgfonlayer}
	%
	%%%%%%%%%%%%%%%%%%%%%%%%%%%%%%%%%%%%%%%%%%%%%%%%%%%%%%%%%%%%%%%%%%%%%%%%%%%%%%%
	\begin{scope}[minimum width=15pt]
		\begin{pgfonlayer}{foregroundA}
			\coordinate[point,ctube] (x0) at (0,0.25);
		\end{pgfonlayer}
		\begin{pgfonlayer}{foregroundA}
			\coordinate[point,ctube] (x1) at (1,0.5);
			\coordinate (x1north) at ($(x1)-(0,0.7)$);
			\coordinate (x1south) at ($(x1)+(0,0.7)$);
			\coordinate (x1northP) at ($(x1)-(0,0.5)$);
			\coordinate (x1southP) at ($(x1)+(0,0.5)$);
		\end{pgfonlayer}
		\begin{pgfonlayer}{backgroundA}
			\draw[robusttube] (x0.north) -- (x1north);
			\draw[robusttube] (x0.south) -- (x1south);
			\draw[prediction] (x0.east) -- (x1.west);
		\end{pgfonlayer}
		\begin{pgfonlayer}{mainA}
			\path[rhead] (x1north) circle (1pt);
            \path[rhead] (x1south) circle (1pt);
			\draw[very thick,probtube] (x1northP) node[phead] {} -- (x1southP) node[phead] {};
		\end{pgfonlayer}
		\begin{pgfonlayer}{foregroundA}
			\coordinate[area,minimum height=1cm,ctube] (x2) at (2,0.75);
			\coordinate (x2north) at ($(x2.north)-(0,0.7)$);
			\coordinate (x2south) at ($(x2.south)+(0,0.7)$);
			\coordinate (x2northP) at ($(x2.north)-(0,0.5)$);
			\coordinate (x2southP) at ($(x2.south)+(0,0.5)$);
		\end{pgfonlayer}
		\begin{pgfonlayer}{backgroundA}
			\draw[robusttube] (x1north) -- (x2north);
			\draw[robusttube] (x1south) -- (x2south);
			\draw[inwardtube] (x1north) -- (x2.north);
			\draw[inwardtube] (x1south) -- (x2.south);
			\draw[prediction] (x1.east) -- (x2.west);
		\end{pgfonlayer}
		\begin{pgfonlayer}{mainA}
        	\path[rhead] (x2north) circle (1pt);
            \path[rhead] (x2south) circle (1pt);
			\draw[very thick,probtube] (x2northP) node[phead] {} -- (x2southP) node[phead] {};
		\end{pgfonlayer}
		\begin{pgfonlayer}{foregroundA}
			\coordinate[area,minimum height=1.5cm,ctube] (x3) at (3,0.875);
			\coordinate (x3north) at ($(x3.north)-(0,0.7)$);
			\coordinate (x3south) at ($(x3.south)+(0,0.7)$);
			\coordinate (x3northP) at ($(x3.north)-(0,0.5)$);
			\coordinate (x3southP) at ($(x3.south)+(0,0.5)$);
            \coordinate (x4north) at ($(x3north)+(1,0)$);
            \coordinate (x4south) at ($(x3south)+(1,0.2)$);
		\end{pgfonlayer}
		\begin{pgfonlayer}{backgroundA}
			\draw[robusttube] (x2north) -- (x3north);
			\draw[robusttube] (x2south) -- (x3south);
			\draw[inwardtube] (x2north) -- (x3.north);
			\draw[inwardtube] (x2south) -- (x3.south);
			\draw[prediction] (x2.east) -- (x3.west);
            \begin{scope}
                \clip (-0.5,-0.3) rectangle (3.5,2.2);
                \draw[robusttube] (x3north) -- (x4north);
                \draw[robusttube] (x3south) -- (x4south);
            \end{scope}
		\end{pgfonlayer}
		\begin{pgfonlayer}{mainA}
			\path[rhead] (x3north) circle (1pt);
            \path[rhead] (x3south) circle (1pt);
			\draw[very thick,probtube] (x3northP) node[phead] {} -- (x3southP) node[phead] {};
		\end{pgfonlayer}
	\end{scope}

	\begin{pgfonlayer}{foreground}
		\draw[prediction] (x0.west) -- (x0.east);
		\draw[prediction] (x1.west) -- (x1.east);
		\draw[prediction] (x2.west) -- (x2.east);
		\draw[prediction] (x3.west) -- (x3.east);
		\path[opacity=0.5] (x1north) -- node [anchor=south,pos=0.65] {$\cdot ρ$} (x2.north);
		\path[opacity=0.5] (x2north) -- node [anchor=south,pos=0.65] {$\cdot ρ$} (x3.north);
		\begin{scope}[draw opacity=0.6]
			\draw[probtube,semithick,decorate,decoration={brace,amplitude=5pt,aspect=0.25}] ($(x1southP)+(-0.5pt,0)$) -- node[anchor=east,pos=0.25,xshift=-2pt] {$\mathcal{S}^p_{1|t}$} ($(x1northP)+(-0.5pt,0)$);
			\draw[robusttube,semithick,decorate,decoration={brace,amplitude=5pt,aspect=0.675}] ($(x1north)+(0.5pt,0)$) -- node[anchor=west,pos=0.675,xshift=2pt] {$\:\mathcal{S}^1_{1|t}$} ($(x1south)+(0.5pt,0)$);
			\draw[probtube,semithick,decorate,decoration={brace,amplitude=5pt,aspect=0.13}] ($(x2southP)+(-0.5pt,0)$) -- node[anchor=east,pos=0.13,xshift=-2pt] {$\mathcal{S}^p_{2|t}$} ($(x2northP)+(-0.5pt,0)$);
			\draw[robusttube,semithick,decorate,decoration={brace,amplitude=5pt,aspect=0.79}] ($(x2north)+(0.5pt,0)$) -- node[anchor=west,pos=0.79,xshift=2pt] {$\:\mathcal{S}^1_{2|t}$} ($(x2south)+(0.5pt,0)$);
			\draw[probtube,semithick,decorate,decoration={brace,amplitude=5pt,aspect=0.11}] ($(x3southP)+(-0.5pt,0)$) -- node[anchor=east,pos=0.11,xshift=-2pt] {$\mathcal{S}^p_{3|t}$} ($(x3northP)+(-0.5pt,0)$);
			\draw[robusttube,semithick,decorate,decoration={brace,amplitude=5pt,aspect=0.82}] ($(x3north)+(0.5pt,0)$) -- node[anchor=west,pos=0.82,xshift=2pt] {$\:\mathcal{S}^1_{3|t}$} ($(x3south)+(0.5pt,0)$);
			
			\draw[draw opacity=1,white,ultra thick] (2.5,0.78) -- (2.5,0.475);
			\draw[ctube,semithick,decorate,decoration={brace,amplitude=5pt,aspect=0.3}] (x2.north east) -- node[anchor=west,pos=0.3] {$\:ρ\mathcal{S}^1_{1|t}$} (x2.south east);
			\draw[ctube,semithick,decorate,decoration={brace,amplitude=5pt,aspect=0.3}] (x3.north east) -- node[anchor=west,pos=0.3] {$\:ρ\mathcal{S}^1_{2|t}$} (x3.south east);
		\end{scope}
	\end{pgfonlayer}
    %\draw[magenta,ultra thick,dashed] (current bounding box.north west) rectangle (current bounding box.south east);
\end{tikzpicture}
    \caption{
        Illustration of the idea behind the proposed incremental backoff (\cref{thm:backoff:incremental}).
        The robust and stochastic tubes are shown in orange and blue, respectively.
    }
    \label{fig:backoff:incremental}
\end{figure}
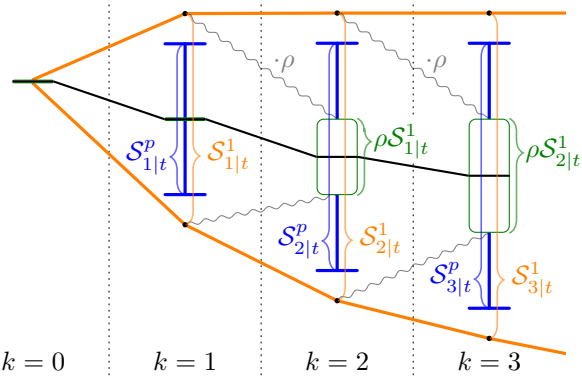
The idea of the tube construction is illustrated in \cref{fig:backoff:incremental}.
Starting off, we begin with the prediction (black).
Using \cref{prop:disturbance:ilyap} for $ε=1$, a robust tube (orange) can be constructed around this prediction, inside which the true state will certainly lie.
This tube is constituted by the sublevel set
\begin{equation}
	\mathcal{S}^1_{k|t}≔\left\lbrace x∈ℝ^n\middle|V_δ(x,x_{k|t},u_{k|t})≤s^1_{k|t}\right\rbrace\,.
\end{equation}

If one assumes that the previous time step was without disturbance, \ie $d_w=0$, then a contraction $ρ$ of the robust set (\cref{ass:local_incremental_stabilizability}) is reached by the incremental stabilization $κ$.
Thus, we obtain an inner tube (green) lacking the influence of the last disturbance with the sets $\mathcal{S}^0_{k+1|t}≔ρ\mathcal{S}^1_{k|t}$.

Using \cref{prop:disturbance:ilyap} for $ε=p∈(0,1)$, the disturbance is added to the tube. 
Thereby, we obtain an $ε$-likely tube, indicate by the blue error bars.
This tube confines the state with a probability greater than $ε$ at each time step in the sets
\begin{align}
	&\mathcal{S}^p_{k+1|t}≔\mathcal{S}^1_{k|t}+\left\lbrace x∈ℝ^n\middle|V_δ(x,x_{k|t},u_{k|t})≤w^p_{k|t}\right\rbrace\\
	&\quad=\left\lbrace x∈ℝ^n\middle|V_δ(x,x_{k|t},u_{k|t})≤ρs^1_{k|t}+w^p_{k|t}≕s^p_{k+1|t}\right\rbrace\,.\notag
\end{align}

By this construction, we can employ the size $s^p$ of the sublevel sets of $V_δ$, \ie the size of our tube, as our backoff to ensure ICC satisfaction formally by the following theorem.
\begin{theorem}\label[theorem]{thm:backoff:incremental}
	Suppose that \cref{ass:disturbance:icdf,ass:local_incremental_stabilizability,ass:icc:lipschitz,ass:disturbance:icdf:ilyap} hold, then
    for all $k≥0$ with $(x_{k-1|t+1}, x_{k|t}, u_{k|t})∈ℝ^n×\mathcal{Z}$ such that $V_δ(x_{k-1|t+1}, x_{k|t}, u_{k|t})≤(s^1_{k|t})^2$, the deterministic constraints \cref{eq:snmpc:constraint:chance} imply the nonlinear ICCs \cref{eq:iccs}.
\end{theorem}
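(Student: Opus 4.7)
The plan is to leverage, within one step of the prediction, the contractivity of the incremental feedback together with the probabilistic bound on $d_w$, in order to show that the actual one-step successor of $x_{k-1|t+1}$ lies inside the sublevel set $\{x : V_δ(x, x_{k+1|t}, u_{k+1|t}) ≤ (s^{p_j}_{k+1|t})^2\}$ with probability at least $p_j$. Once that is established, \cref{prop:incremental_bound:icc} directly translates the tightened constraint \cref{eq:snmpc:constraint:chance} into the nonlinear ICC.

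First I would define the candidate feedback $u^κ := κ(x_{k-1|t+1}, x_{k|t}, u_{k|t})$ and the corresponding stochastic successor $\hat{x}^+ := f(x_{k-1|t+1}, u^κ) + d_w(x_{k-1|t+1}, u^κ)$. The hypothesis $V_δ(x_{k-1|t+1}, x_{k|t}, u_{k|t}) ≤ (s^1_{k|t})^2$ combined with the contractivity property \cref{eq:local_incremental_stabilizability:contractivity} immediately yields $V_δ(f(x_{k-1|t+1}, u^κ), x_{k+1|t}, u_{k+1|t}) ≤ ρ^2 (s^1_{k|t})^2$, i.e.\ the undisturbed part of $\hat{x}^+$ is contracted into a $V_δ$-tube of radius $ρ s^1_{k|t}$ around $x_{k+1|t}$.

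Next I would invoke \cref{prop:disturbance:ilyap} with $ε = p_j$, $c = s^1_{k|t}$, and $(z^+, v^+) = (x_{k+1|t}, u_{k+1|t})$ to add the stochastic part probabilistically: with probability at least $p_j$ the disturbance contributes at most $(\tilde{w}^{p_j}_δ(x_{k|t}, u_{k|t}, s^1_{k|t}))^2$ to $V_δ$ between $x_{k+1|t}$ and $x_{k+1|t}+d_w$. Combining this with the contraction bound via a triangle-type inequality for $\sqrt{V_δ}$ (which is available through the quadratic equivalence in \cref{eq:local_incremental_stabilizability:lyapbound}), and then using \cref{eq:snmpc:var:wp} together with the recursion \cref{eq:snmpc:var:sp}, I expect to conclude, with probability at least $p_j$,
\begin{equation*}
  V_δ(\hat{x}^+, x_{k+1|t}, u_{k+1|t}) ≤ (ρ s^1_{k|t} + w^{p_j}_{k|t})^2 = (s^{p_j}_{k+1|t})^2.
\end{equation*}
The final step is then to apply \cref{prop:incremental_bound:icc} with $c = s^{p_j}_{k+1|t}$ to the triple $(\hat{x}^+, x_{k+1|t}, u_{k+1|t})$, giving $h_j(\hat{x}^+, κ(\hat{x}^+, x_{k+1|t}, u_{k+1|t})) ≤ h_j(x_{k+1|t}, u_{k+1|t}) + c^{\mathrm{P}}_j s^{p_j}_{k+1|t} ≤ 0$ by \cref{eq:snmpc:constraint:chance}, which delivers the ICC \cref{eq:iccs}.

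The principal technical obstacle is the combination step. \cref{prop:disturbance:ilyap} only bounds the ``nominal successor plus disturbance'' term $z^++d_w$ against $z^+$, while the contraction bound controls the contracted offset $f(x_{k-1|t+1}, u^κ)$ against $x_{k+1|t}$; merging them cleanly into $(ρ s^1_{k|t} + w^{p_j}_{k|t})^2$ needs the sub-additivity of $\sqrt{V_δ}$, which follows from the quadratic bounds on $V_δ$ but may introduce constants that must be absorbed into the definition of $\tilde{w}^{p_j}_δ$ to keep the recursion \cref{eq:snmpc:var:sp} clean. A secondary subtlety is that $\hat{x}^+$ uses the auxiliary feedback $κ$ rather than the actual closed-loop input $u^*_{0|t+1}$; the theorem should therefore be read as a pointwise implication about the MPC constraints, with the identification of $\hat{x}^+$ to the closed-loop state left to the recursive feasibility analysis in \cref{sec:scheme:theory}.
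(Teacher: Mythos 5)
Your proof follows essentially the same route as the paper's own: contractivity of $V_δ$ handles the nominal part, \cref{prop:disturbance:ilyap} adds the disturbance contribution with probability $p_j$, the two are merged into $\sqrt{V_δ} \le ρ s^1_{k|t} + w^{p_j}_{k|t} = s^{p_j}_{k+1|t}$, and \cref{prop:incremental_bound:icc} translates this into the tightened constraint \cref{eq:snmpc:constraint:chance} implying the ICC. The combination step you flag as the principal obstacle is exactly the point the paper itself glosses over (``together, this yields''), so your remark about needing sub-additivity of $\sqrt{V_δ}$ is a fair, slightly more careful reading of the same argument rather than a different one.
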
\vspace*{-0.3\baselineskip}
This theorem will be proven jointly with \cref{thm:snmpc} in \cref{sec:scheme:theory}.

\subsection{Terminal ingredients}\label{sec:scheme:terminal}
By using the minimal bound on the uncertainty $\bar{w}_\mathrm{min}$ and the maximal tube size $\bar{s}$
\begin{align}
	\bar{w}_\mathrm{min} &= \inf_{(x,u)∈\mathcal{Z}_\mathrm{R}} \tilde{ω}_δ(x,u,0)\,, & \bar{s}&=\sqrt{δ_\mathrm{loc}}\,,\label{eq:snmpc:smax}
\end{align}
we capture the desired properties of the terminal ingredients in the following assumption.
\begin{assumption}\label[assumption]{ass:snmpc:terminalcontroller}
	There exist a terminal controller $k_f : ℝ^n → ℝ^m$, a terminal cost function $V_f : ℝ^n → ℝ_{≥0}$, a terminal set $\mathcal{X}_{\mkern-5mu f} ⊂ ℝ^{n+1}$ , and a constant $\bar{w} ∈ ℝ_{≥0}$ such that the following holds for all $(x, s) ∈ \mathcal{X}_{\mkern-5mu f}$ ,  all $d_w ∈ ℝ^n$,  all $w ∈ [\bar{w}_\mathrm{min}, \bar{w}]$, and all $s^+ ∈
	[0, ρs - ρ^N w+\tilde{w}_δ(x, k_f(x), s)]$, such that $V_δ(x^++d_w, x^+, k_f(x^+)) ≤ ρ^{2N} w^2$ with $x^+ = f(x, k_f(x))$:
	\begin{subequations}\label{eq:snmpc:terminalcontroller}
	\begin{align}
		V_f (x) - ℓ(x, k_f(x)) &≥ V_f(x^+) \,,\label{eq:snmpc:terminalcontroller:ilyap}\\
		(x^++d_w , s^+) &∈ \mathcal{X}_{\mkern-5mu f}\,,\label{eq:snmpc:terminalcontroller:xf}\\
		\tilde{w}_δ^1(x, k_f(x), s) &≤ \bar{w}\,,\label{eq:snmpc:terminalcontroller:w}\\
		g_i(x, k_f(x))+c^\mathrm{R}_is &≤ 0,\label{eq:snmpc:terminalcontroller:hc}\\
		ρs-ρ^Nw+\tilde{w}_δ^{p_j}(x, k_f(x), s)&≕β^{p_j}\\
		h_j(x^+, k_f(x^+))+c^\mathrm{P}_jβ^{p_j} &≤ 0,\label{eq:snmpc:terminalcontroller:icc}\\
		s &≤ \bar{s}\,,\label{eq:snmpc:terminalcontroller:s}
	\end{align}
	\end{subequations}
	with $i = 1,\dotsc,q_\mathrm{R}$, and $j = 1,\dotsc,q_\mathrm{P}$.
	Furthermore, the terminal cost $V_f$ is continuous on the compact set $\mathcal{X}_{f, x} ≔ \left\lbrace x\,\middle|\, ∃\mkern1mu s ∈ [0, s], (x, s) ∈ \mathcal{X}_{\mkern-5mu f} \right\rbrace$, i.e., there
	exists a function $α_f ∈ \mathcal{K}_∞$ such that
	\begin{equation}
		V_f(z) ≤ V_f(x)+α_f(‖x - z‖), ∀x, z ∈ \mathcal{X}_{f, x}\,. \label{eq:snmpc:terminalcontroller:ilyap:continuous}
	\end{equation}
\end{assumption}
These technical conditions are similar to the standard conditions in nominal MPC for the augmented state $(x,s)$ and input $(u,w,w^p)$.
Details on constructive satisfaction can be found in \citet{Koehler2019}.

The only extension to the robust case, is the inclusion of the ICC in the construction of the terminal set, \ie \cref{eq:snmpc:terminalcontroller:icc}. 
This ensures that also the ICCs are satisfied by the terminal controller, using the same backoff technique as just described in \cref{sec:scheme:icc}.
For the construction of the terminal set, these constraints are treated similar to the hard constraint \cref{eq:snmpc:terminalcontroller:hc}.

\subsection{Theoretical analysis}\label{sec:scheme:theory}
In the following theorem, we provide guarantees on the closed-loop properties of the proposed MPC scheme.
\begin{theorem}\label[theorem]{thm:snmpc}
	Let \cref{ass:local_incremental_stabilizability,ass:cost:kinf_bound,ass:hc:lipschitz,ass:disturbance:icdf,ass:icc:lipschitz,ass:snmpc:terminalcontroller,ass:disturbance:icdf:ilyap} hold, and suppose that \cref{eq:snmpc} is feasible at t = 0. Then \cref{eq:snmpc} is recursively feasible, the constraints \cref{eq:hc}, \cref{eq:iccs} are satisfied and the origin is practically asymptotically stable for the resulting closed loop system.
\end{theorem}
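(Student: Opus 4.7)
The plan is to follow the standard three-part MPC pattern: establish recursive feasibility by constructing a candidate solution from the previous optimum, deduce constraint satisfaction directly from the tube construction, and conclude practical asymptotic stability via a Lyapunov argument using $V_N$.

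For \textbf{recursive feasibility}, I would build the candidate at time $t+1$ by shifting the optimal trajectory at $t$ and incrementally stabilizing: set
\begin{equation*}
u_{k|t+1} := \kappa(x_{k|t+1}, x^*_{k+1|t}, u^*_{k+1|t}),\quad k = 0,\dotsc,N-2,
\end{equation*}
with $u_{N-1|t+1} := k_f(x_{N-1|t+1})$, $x_{k+1|t+1} = f(x_{k|t+1}, u_{k|t+1})$, and propagate $s^p_{\cdot|t+1}$ via the recursion \cref{eq:snmpc:var:sp} starting from $0$, together with $w^p_{k|t+1} := \tilde{w}^p_\delta(x_{k|t+1}, u_{k|t+1}, s^p_{k|t+1})$. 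The key geometric fact is that \cref{prop:disturbance:ilyap} with $\varepsilon = 1$ gives $V_\delta(x(t+1), x^*_{1|t}, u^*_{1|t}) \leq (s^{1,*}_{1|t})^2$ with probability one, and \cref{ass:local_incremental_stabilizability} then propagates this as $V_\delta(x_{k|t+1}, x^*_{k+1|t}, u^*_{k+1|t}) \leq \rho^{2k}(s^{1,*}_{1|t})^2$. Combined with the monotonicity properties \cref{eq:disturbance:icdf:ilyap:monoton:c,eq:disturbance:icdf:ilyap:monoton:delta} of $\tilde{w}^p_\delta$, this yields the pointwise bound $s^p_{k|t+1} \leq s^{p,*}_{k+1|t}$, from which \cref{prop:incremental_bound:hc,prop:incremental_bound:icc} deliver feasibility of \cref{eq:snmpc:constraint:robust,eq:snmpc:constraint:chance,eq:snmpc:constraint:tube}. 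The appended terminal step is certified by \cref{ass:snmpc:terminalcontroller}, in particular \cref{eq:snmpc:terminalcontroller:xf,eq:snmpc:terminalcontroller:w,eq:snmpc:terminalcontroller:hc,eq:snmpc:terminalcontroller:icc,eq:snmpc:terminalcontroller:s}.

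For \textbf{constraint satisfaction}, the hard constraints \cref{eq:hc} follow immediately from \cref{eq:snmpc:constraint:robust} at $k = 0$, since $x_{0|t} = x(t)$ and $s^1_{0|t} = 0$ force $g_i(x(t), u(t)) \leq 0$. The chance constraints \cref{eq:iccs} follow by invoking \cref{thm:backoff:incremental}: the required hypothesis $V_\delta(x(t+1), x^*_{1|t}, u^*_{1|t}) \leq (s^{1,*}_{1|t})^2$ is precisely the robust tube statement above, and the deterministic constraint \cref{eq:snmpc:constraint:chance} is enforced by the scheme, so the ICC $\mathbb{P}_t[h_j(x(t+1), u(t+1)) \leq 0] \geq p_j$ holds. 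Recursive feasibility ensures that $u(t+1) = u^*_{0|t+1}$ is well-defined.

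For \textbf{practical asymptotic stability}, I use $V_N$ as a Lyapunov function. Evaluating $J_N$ along the candidate yields
\begin{equation*}
V_N(x(t+1)) - V_N(x(t)) \leq -\ell(x(t), u(t)) + \Delta_\ell + \Delta_f,
\end{equation*}
where $\Delta_\ell$ collects the stage-cost mismatches along the candidate versus the shifted optimum, bounded via \cref{prop:incremental_bound:cost} together with the geometric contraction in $\rho$ into a $\mathcal{K}_\infty$ function of $\tilde{w}^1_\delta(x^*_{0|t}, u^*_{0|t}, 0)$, and $\Delta_f$ is the terminal cost gap, controlled by \cref{eq:snmpc:terminalcontroller:ilyap,eq:snmpc:terminalcontroller:ilyap:continuous}. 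Together with $\ell \geq \alpha_\ell(\|\cdot\|)$ from \cref{ass:cost:kinf_bound} and the uniform bound on $\tilde{w}^1_\delta$ over $\mathcal{Z}_\mathrm{R}$, this delivers practical asymptotic stability of the origin. I expect the main obstacle to lie in the candidate construction for the stochastic tube sizes $s^p$: threading both monotonicity properties of $\tilde{w}^p_\delta$ through the shift to obtain $s^p_{k|t+1} \leq s^{p,*}_{k+1|t}$, while simultaneously matching the definition of $\beta^{p_j}$ in \cref{ass:snmpc:terminalcontroller} at the appended terminal step, will require careful bookkeeping—especially because the $c_1 - c_2$ spacing demanded by \cref{eq:disturbance:icdf:ilyap:monoton:c} has to be supplied by the induction hypothesis rather than a free parameter.
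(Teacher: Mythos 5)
Your overall architecture is the same as the paper's: a shifted candidate built with the incrementally stabilizing feedback $\kappa$, a tube-size recursion showing the candidate tube fits inside the shifted optimal tube, terminal ingredients for the last step, the standard $V_N$ descent bound for practical stability, and \cref{thm:backoff:incremental} to convert the deterministic tightening \cref{eq:snmpc:constraint:chance} into the ICCs \cref{eq:iccs}. Two concrete points need repair, however. First, your terminal candidate input $u_{N-1|t+1}=k_f(x_{N-1|t+1})$ does not mesh with \cref{ass:snmpc:terminalcontroller} as stated: the contraction bound $V_\delta(x_{k|t+1},x^*_{k+1|t},u^*_{k+1|t})\le \rho^{2k}(w^{1,*}_{0|t})^2$ is obtained from \cref{eq:local_incremental_stabilizability:contractivity}, which requires the input to be $\kappa(\cdot)$ tracking a nominal reference; the paper therefore first extends the optimal trajectory by one step with $u^*_{N|t}=k_f(x^*_{N|t})$ and sets $u_{N-1|t+1}=\kappa(x_{N-1|t+1},x^*_{N|t},u^*_{N|t})$, so that $x_{N|t+1}$ is a $\rho^{2N}$-small perturbation $d_w$ of $x^*_{N+1|t}$ and \cref{eq:snmpc:terminalcontroller:xf} applies verbatim. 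With $k_f$ applied directly to the (perturbed) candidate state, neither the contraction bound nor the terminal-set membership follows from the stated assumptions.

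Second, the bound $s^p_{k|t+1}\le s^{p,*}_{k+1|t}$ you assert is too weak to close the induction. To get $w^p_{k|t+1}\le w^{p,*}_{k+1|t}$ from the monotonicity property \cref{eq:disturbance:icdf:ilyap:monoton:c} you need $V_\delta(x_{k|t+1},x^*_{k+1|t},u^*_{k+1|t})\le (s^{p,*}_{k+1|t}-s^p_{k|t+1})^2$, i.e.\ a quantified gap between the two tube sizes. The paper supplies exactly this by proving the stronger statement $s^p_{k|t+1}\le s^{p,*}_{k+1|t}-\rho^{k}w^{1,*}_{0|t}$, whose slack $\rho^k w^{1,*}_{0|t}$ matches the bound $V_\delta\le \rho^{2k}(w^{1,*}_{0|t})^2$; the same slack is also what makes the interval $s^+\in[0,\rho s-\rho^N w+\tilde w_\delta(\cdot)]$ and the $\beta^{p_j}$ backoff in \cref{ass:snmpc:terminalcontroller} nonvacuous at the terminal step. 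You correctly anticipate this issue in your closing remark, but the proof is incomplete until the strengthened inequality is actually carried through the recursion; as written, your induction hypothesis does not contain the spacing it needs.
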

\begin{proof}[Proof of \cref{thm:snmpc}]\begingroup%
	\renewcommand{\theparagraph}{\roman{paragraph}}%
	\RenewDocumentCommand{\paragraph}{s m}{\par\stepcounter{paragraph}\textsc{\theparagraph.}\,\textit{#2}\/\kern0.5pt:\hspace*{0.1em}\IfBooleanF{#1}{\hspace*{0.4em}}}%
	The proof is based on an extension of the main idea behind \citet[\abvThm1]{Koehler2019}, as such we will refer to their results, whenever it is possible.
	This will enable us to focus on handling the chance constraints, as the impact of the hard constraints is equivalent.
	
	The core idea is to use the control law $κ$ from \cref{ass:local_incremental_stabilizability} to construct a candidate solution, ensuring recursive feasibility and bounding the cost increase.
	
	\paragraph{Candidate Solution}\bgroup\allowdisplaybreaks
	For convenience, define
	\begin{subequations}
	\begin{align}
		u^∗_{N|t} &=k_f(x^∗_{N|t})\,, \quad u^∗_{N+1|t} = k_f(x^∗_{N+1|t}),\\
		x^∗_{N+1|t} &= f(x^∗_{N|t}, u^∗_{N|t})\,,\\
		w^{p,∗}_{N|t} &=\tilde{w}_δ^p(x^∗_{N|t}, u^∗_{N|t}, s^∗_{N|t})\,.
	\end{align} 
	\end{subequations}
	Consider the adapted candidate solution, \ie
	\begin{subequations}\label{eq:snmpc:recursive_feasbility:proof:candiate}
	\begin{align}
		x_{0|t+1} &= x(t+1) = f_w(x_{0|t}, u_{0|t}, d(t))\,,\\
		u_{k|t+1} &= κ(x_{k|t+1}, x^∗_{k+1|t}, u^∗_{k+1|t})\,,\\
		x_{k+1|t+1} &= f(x_{k|t+1}, u_{k|t+1})\,,\\
		s^p_{k+1|t+1} &= ρ s^1_{k|t+1}+w^p_{k|t+1}\,,\quad s^p_{0|t+1} = 0\,,\\
		w^p_{k|t+1} &= \tilde{w}_δ^p(x_{k|t+1}, u_{k|t+1}, s_{k|t+1})\,,  \label{eq:snmpc:recursive_feasbility:proof:candiate:wp}
	\end{align}
	\end{subequations}
	with $k=0,\dotsc,N-1$ and $p∈\mathcal{P}$.
	As in \citet[eq.\,17]{Koehler2019}, we obtain using \cref{prop:disturbance:ilyap} \cref{eq:disturbance:ilyap} with $ε=1$ and repeatedly applying \cref{ass:local_incremental_stabilizability} \cref{eq:local_incremental_stabilizability:contractivity} that for $k = 0,\dotsc,N$
	\begin{equation}
		V_δ(x_{k|t+1}, x^*_{k+1|t}, u^*_{k+1|t}) ≤ ρ^{2k} [w^*_{0|t}]^2 ≤ δ_{\mathrm{loc}}\,.	\label{eq:snmpc:recursive_feasbility:proof:Vdelta:bound}
	\end{equation}
	Thus, the candidate and previous optimal solution stay in the region $V_δ(z,x,v)≤ δ_\mathrm{loc}$, for which we have a local incremental Lyapunov function $V_δ$ by \cref{ass:local_incremental_stabilizability}.
	\egroup
	
	\paragraph{Tube Dynamics} From \citet[Proof of Thm.\,1, Part II, eq.\,18-19]{Koehler2019}, we have the inequalities
	\begin{align}
		s^1_{k|t+1} &≤ s^{1,*}_{k+1|t} - ρ^k w^{1,*}_{0|t}\,,\label{eq:snmpc:recursive_feasbility:proof:s1:decrease}\\
		w^1_{k|t+1} &≤ w^{1,*}_{k+1|t}\,,\label{eq:snmpc:recursive_feasbility:proof:w1:decrease}
	\end{align}
	for $k=0,\dotsc,N-1$. 
	Analogously to the derivation of \cref{eq:snmpc:recursive_feasbility:proof:w1:decrease}, we can show
	\begin{equation}
		w^p_{k|t+1}≤w^{p,*}_{k+1|t}\,.
		\label{eq:snmpc:recursive_feasbility:proof:wp:decrease}
	\end{equation}
	
	This enables us to consider the general case of $s^p_{k|t+1}$, yielding that for all $p ∈ \mathcal{P}∪\lbrace1\rbrace$ the inequality 
	\begin{equation}
		s^p_{k|t+1}≤s^{p,*}_{k+1|t}-ρ^{k}w^{1,*}_{0|t}
		\label{eq:snmpc:recursive_feasbility:proof:sp:decrease}\\
	\end{equation}
	holds for all $k=0,\dotsc,N-1$ and $j=1,\dotsc,q_\mathrm{P}$, since
	\begin{align*}
		\begingroup\arraycolsep=1pt\begin{array}{rcl}
			s^p_{0|t+1} 
			&{}\stackrel{\cref{eq:snmpc:init}}{=}{}&
			0\stackrel{\cref{eq:snmpc:var:sp}}{=}
			s^{1,*}_{1|t}-ρ^{0}w^{1,*}_{0|t}\\
			s^p_{k+1|t+1} 
			&{}\stackrel{\cref{eq:snmpc:var:sp}}{=}{}&
			ρs^1_{k|t+1}+w^p_{k|t+1}\\
			&{}\stackrel{\cref{eq:snmpc:recursive_feasbility:proof:s1:decrease}}{≤}{}&
			ρs^{1,*}_{k+1|t}-ρ^{k+1}w^{1,*}_{0|t}+w^p_{k|t+1}\\
			&{}\stackrel{\cref{eq:snmpc:recursive_feasbility:proof:wp:decrease}}{≤}{}&
			ρs^{1,*}_{k+1|t}-ρ^{k+1}w^{1,*}_{0|t}+w^{p,*}_{k+1|t}\\
			&{}\stackrel{\cref{eq:snmpc:var:sp}}{=}{}&
			s^{p,*}_{k+2|t}-c^\mathrm{P}_jρ^{k+1}w^{1,*}_{0|t}\,.
		\end{array}\endgroup
	\end{align*}

	\paragraph{Satisfaction of Hard Constraints, Terminal Constraints, and Tube Bounds} 
	The constraints \cref{eq:snmpc:constraint:robust,eq:snmpc:constraint:terminal,eq:snmpc:constraint:tube}
	are satisfied by the candidate solution \cref{eq:snmpc:recursive_feasbility:proof:candiate} as shown in \citet[Proof for \abvThm1, Part III\,–\kern1ptV]{Koehler2019}
	
	\paragraph{Satisfaction of Deterministic ICC Replacement} 
	In the following, we show that the deterministic constraints \cref{eq:snmpc:constraint:chance} used in place of the ICCs \cref{eq:iccs} hold for $k=0,\dotsc,N-1$.
			
	For $k=0,\dotsc,N-2$, we have
	\begin{equation*}
	\begingroup\arraycolsep=1pt\begin{array}{cl}
		&h_j(x_{k+1|t+1},u_{k+1|t+1})+s^p_{k+1|t+1}\\
		\stackrel{\cref{eq:incremental_bound:icc}, \cref{eq:snmpc:recursive_feasbility:proof:Vdelta:bound}}{≤} 
		&h_j(x^*_{k+2|t},u^*_{k+2|t})+c^\mathrm{P}_j ρ^{k+1} w^*_{0|t}+s^p_{k+2|t+1}\\
		\stackrel{\cref{eq:snmpc:recursive_feasbility:proof:sp:decrease}}{≤}{}
		&h_j(x^*_{k+2|t},u^*_{k+2|t})+s^{p,*}_{k+2|t}
		\stackrel{\cref{eq:snmpc:constraint:chance}}{≤} 0
	\end{array}\endgroup 
	\end{equation*}
	The terminal condition \cref{eq:snmpc:constraint:terminal} ensures constraint satisfaction for $k=N-1$ with
	\begin{align*}
	\begingroup\arraycolsep=1pt\begin{array}{cl}
		&h_j(x_{N|t+1},u_{N|t+1})+c^p_js^p_{N|t+1}\\
		\stackrel{
			\cref{eq:snmpc:recursive_feasbility:proof:Vdelta:bound},
			\cref{eq:incremental_bound:icc},
			\cref{eq:snmpc:recursive_feasbility:proof:sp:decrease}
		}{≤}
		&h_j(x^*_{N+1|t},u^*_{N+1|t})+c^p_js^{p,*}_{N+1|t}
		\stackrel{\cref{eq:snmpc:terminalcontroller:icc}}{≤}
		0
	\end{array}\endgroup
	\end{align*}
		
	\paragraph{Practical Stability} 
	As shown in \citet[Proof of Thm.\,1,Part VI]{Koehler2019}, there exist $α^-$, $α^+$, $α_w ∈ \mathcal{K}_∞$ such that
	\begin{gather}
		α^-(‖x(t)‖)≤V_N(x(t))≤α^+(‖x(t)‖)\,,\\
		V_N(x(t+1))-V_N(x(t))≤-α^-(‖x(t)‖)+α_w(\bar{w})\,.
	\end{gather}
	Thus, the closed-loop is practically asymptotically stable. 
	
	\paragraph{Closed-looped Chance Constraint Satisfaction}
	Since \cref{eq:snmpc:constraint:chance} implies the ICCs \cref{eq:iccs} by \cref{thm:backoff:incremental}, the ICCs are satisfied in closed-loop with at least the specified probability.
	\hfill{\qedsymbol}
\endgroup\end{proof}
\begin{proof}[Proof of \cref{thm:backoff:incremental}]
	Again, we consider just a single ICC \cref{eq:icc:single}.
	By \cref{prop:incremental_bound:icc}, $V_δ(x_{k|t+1},x_{k+1|t},u_{k+1|t})≤c^2$ implies 
	\begin{align}
		h(x_{k|t+1},u_{k|t+1})-h(x_{k+1|t},u_{k+1|t}) ≤c^\mathrm{P}\! \cdot  c
		\label{eq:backoff:incremental:derivation:icc:robust:ilyapdiff:predicition}
	\end{align}
	By \cref{ass:local_incremental_stabilizability} \cref{eq:local_incremental_stabilizability:contractivity}, we have  $V_δ(x_{k|t+1},x_{k+1|t},u_{k+1|t}) ≤ (ρs^1_{k|t})^2$ for $d_w=0$.
	Using \cref{prop:disturbance:ilyap}, we can bound the additional increase of $V_δ$ due to the $d_w≠0$. 
	Together, this yields 
	\begin{equation}\bgroup
		\newcommand{\nemu}{\mkern-1mu}\newcommand{\nemumu}{\mkern-2mu}
		ℙ\!\nemu\left[c \nemumu= \smash{\!\sqrt{V_δ(\smash{x_{k|t+1},x_{k+1|t},u_{k+1|t}})}} \nemumu≤\nemumu ρs^1_{k|t}\nemumu+\nemumu w^p_{k|t}\right]\! ≥\nemu p\,.
		\label{eq:backoff:incremental:derivation:s+:probabilistic bound:predicition}
	\egroup\end{equation}
	Then, given that $V_δ(x_{k-1|t+1},v_{k|t},u_{k|t}) ≤ (s^1_{k|t})^2$, substituting \cref{eq:backoff:incremental:derivation:s+:probabilistic bound:predicition} into \cref{eq:backoff:incremental:derivation:icc:robust:ilyapdiff:predicition} yields 
	\begin{equation}\bgroup
			\newcommand{\nemu}{\mkern-1mu}\newcommand{\nemumu}{\mkern-2mu}
			ℙ\!\nemu\left[h(x_{k|t\nemu+\nemu1},u_{k|t\nemu+\nemu1}) \!≤\nemumu h(x_{k\nemu+\nemu1|t},u_{k\nemu+\nemu1|t})\nemumu+\nemumu c^\mathrm{P} s^p_{k\nemu+\nemu1|t}\right]\!≥\nemu p\,,
	\egroup\end{equation}
	hence, we obtain that
	\begin{align}
		&h(x_{k+1|t},u_{k+1|t})+c^\mathrm{P} \cdot  s^p_{k+1|t}≤0\\
		&\qquad⟹ ℙ[ h(x_{k|t+1},u_{k|t+1}) ≤0 ]≥p ⟺ \cref{eq:icc:single}\,.\tag*{\qedsymbol}
	\end{align}
\end{proof}
\subsection{Discussion}
In the following, we discuss various properties of the proposed SMPC scheme, as well as relations to other existing MPC schemes for uncertain systems.
\begin{remark}
Compared to a nominal MPC scheme, $s^p$ and $w^p$ augment the state and the input, resp., for each $p∈\mathcal{P}∪\lbrace 1\rbrace$.
Thus, the online computational demand of solving \cref{eq:snmpc} is comparable to a nominal MPC scheme with $n+1+|\mathcal{P}|$ states, $m+1+|\mathcal{P}|$ inputs, and additional $1+|\mathcal{P}|$ nonlinear constraints for each time step.
Correspondingly, while it is possible to have multiple probability levels $p_j$ for the ICCs, using the same $p$ for all ICCs will be computationally significantly cheaper.
\end{remark}
\begin{remark}
The disturbance bounds \cref{eq:snmpc:var:wp} could also be stated, as equality constraints, which is the case consider for the candidate solution.
Yet, by using inequality constraints, while we still achieve at least the required constraint tightening, we obtain more freedom for the optimization.
Furthermore, allowing tightening beyond the required amount has no impact on the result due to optimality.
In particular, one common form for the function $\tilde{w}_δ$ employs online maximization over a finite set of values, and hence is easily restated as inequality constraints, thereby alleviate the need for additional optimizations within the constraints.
\end{remark}
\begin{remark}
In absence of any ICCs or equivalently, if all ICCs have to be fulfilled with certainty, \ie $\mathcal{P}=\lbrace1\rbrace$, the proposed SMPC scheme trivially reduces to the RMPC scheme proposed by \citet{Koehler2019}.
\end{remark}
\begin{remark}\label[remark]{rem:lipschitz}
The SMPC method in \cite{Santos2019} for additive uncertainty is based on a $\mathcal{K}$-function or as a special case on a Lipschitz constant $L$.
Therein, the authors employed the inverse of empirical cumulative distributions $\hat{F}_W$ as uncertainty description, which can be equivalently used in our method by setting $\hat{w}^p=\hat{F}_W^{-1}(p)$.
In particular, the result on Lipschitz constants $L$ are contained as a special case in our framework with $V_δ(x,z,v)=‖x-z‖^2$, $κ(x,z,v)=v$, and $ρ=L$.
Similarly, the use of a $\mathcal{K}$-function $σ$ is also a special of our scheme using the same $V_δ$ and $κ$ by rewriting $ρ$ and $\tilde{w}_δ$ in terms of $σ$.
The main difference is that we use a backoff for the ICCs depending on the prediction in order to consider state- and input-dependent disturbances, whereas \cite{Santos2019} can employ constant backoff as the current state has no impact on the considered additive stationary stochastic uncertainty.
Overall, the proposed framework using ILF is typically less conservative then either one of these choices \citep{Koehler2018}.
\end{remark}
\begin{remark}
While other methods (\eg \cite{Santos2019}), often require prestabilization in order to limit the tube growth for unstable system.
In our method, this is not necessary as the tube size depends on the controller $κ$ in ILF, which is efficiently able to handle instability.
\end{remark}
\section{Numerical Simulation}\label{sec:numsim}
A widely used benchmark case study in the SMPC literature is the DC-DC-converter regulation problem. 
The discrete-time dynamics translated to the origin are described by \citet{Lazar2008}, including their parameters, as
\begin{equation}
\begingroup\def\arraystretch{1.3}
x^+ \!=\! \begin{bmatrix}
x_1^+\vphantom{\left(\tfrac{T}{L}\right)}\\
x_2^+\vphantom{\left(\tfrac{T}{RC}\right)}
\end{bmatrix} \!=\! \begin{bmatrix}
x_1+αx_2+\left(β-\tfrac{T}{L}x_2\right)u\\
\left(\tfrac{T}{C}x_1+γ\right)u+\left(1-\tfrac{T}{RC}\right)x_2+δx_1
\end{bmatrix}
\endgroup.\label{eq:numsim:sys}
\end{equation}
We consider a (possibly time-varying) parameter uncertainty in $θ=[α,δ]$ with a Gaussian distribution with $Σ_θ=0.1\cdot \mathrm{I}_{2×2}$ variance truncated to a maximal deviation of $1.6σ$.
The system is subject to hard input constraint $|u|≤0.2$, and the electric power is chance constrained by $ℙ[|x_1x_2|≤2]≥0.8$.
Using an ILF $V_δ(x,z,v)=‖x-z‖_P^2$ and controller $κ(x,z,v)=K(x-z)+v$ a contraction rate $ρ≈0.82$ can be achieved.
The disturbance bounds $\tilde{w}_δ^p(x,u,c) = \left\lVert P^\frac{1}{2}\frac{∂f(x,u)}{∂[α,δ]}Σ_θ^\frac{1}{2}\right\rVert ε(p)+L_wc$ is derived analogously to \cite[\abvProp3]{Koehler2019} with Lipschitz constant $L^1_w≈0.15$, $L^{0.6}_w≈0.06$ and $ℙ[‖θ‖^2_{Σ_θ^{-1}}≤ε(p)]≥p$.

For a standard quadratic cost in state and input, stability and constraints satisfaction is achieved.
For this example, however, we shall consider $ℓ(x,u)=u^2$, which voids the stability claim as \cref{ass:cost:kinf_bound} is violated.
We exploit that inherent instability of the system \cref{eq:numsim:sys} will drive the system into the constraint.
Thereby, we can more easily study the chance constraint satisfaction.
In the same vein, we drop the terminal constraints required for the stabilization.

\begin{figure}
    \bgroup
    \def\mydatapath{./figs}
    	\begin{tikzpicture}[%
        /pgfplots/every major grid/.style={black!40},%
        /pgfplots/every minor grid/.style={black!10},%
    ]
		\begin{axis}[
            grid=both,minor tick num=1,distance=1,
            xmin=-2,xmax=2,ymin=-3,ymax=4,samples=200,
            xlabel={$x_1$},ylabel={$x_2$}
        ]
            \addplot[smooth,red,thick] {min(sqrt(2/x/x),5)};
            \addplot[smooth,red,thick] {max(-sqrt(2/x/x),-3.5)};
            \addplot[] table[x=x1,y=x2] {\mydatapath/numsim.csv} node[pos=0] (x0) {};%
		\end{axis}
	\end{tikzpicture}
    \egroup
	\caption{Evolution of closed-loop states over 1000 time-steps. The red lines represented the chance constraint. The initial condition is $x(0)=[-1.25,0]^⊤$.}
	\label{fig:numsim}
\end{figure}
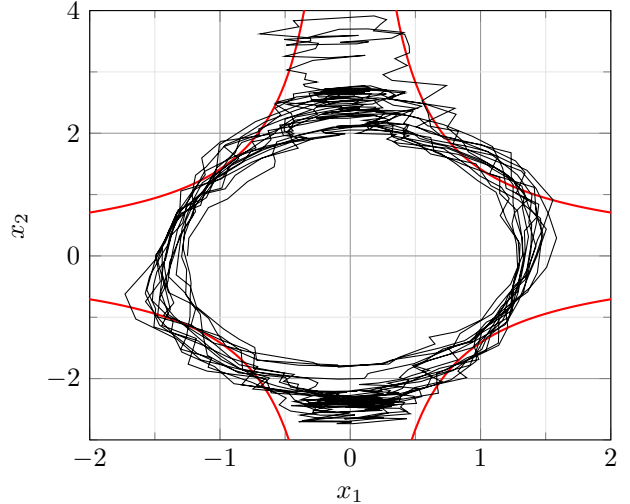

In \cref{fig:numsim}, we can see one evolution of the closed-loop under the proposed SMPC without stabilization.
In 95\% of the 14000 simulated steps the power constraint is satisfied. 
While the constraint is not active in all of the considered state (\cf \cref{fig:numsim}), this is admittedly still rather conservative compared to the required 80\% satisfaction.
Yet, this is significantly less conservative then approximating the disturbance as additive would be.
Likewise, as stated in \cref{rem:lipschitz}, using Lipschitz constants would yield a more conservative result.
With our approach, however, the conservatism could be further reduced by using a less conservative disturbance bound $\tilde{w}_δ$ at the price of additional computational complexity.
Therefore, our method can be tune to the desired comprise between conservatism and complexity.
At the same time with the fairly conservative, but simple, bound presented in this example, we achieved tighter satisfaction than existing methods.

\section{Conclusion}
We have proposed a nonlinear SMPC framework based on incremental stabilizability for nonlinear systems incorporating general state- and  input- dependent uncertainty descriptions.
The scheme can ensures satisfaction of individual chance constraints and hard constraints, as well as recursive feasibility.
By using a specially designed growing tube, we achieve this with only a small computation cost increase over nominal MPC.
We have demonstrated the performance gains of the proposed framework over RMPC. 

\begin{acknowledgments}
The authors would like to thank our colleagues J.\,Köhler, and R.\,Soloperto for many helpful comments and fruitful discussions.
\end{acknowledgments}

\bibliography{database}
\normalsize
\end{document}